\newtheorem{lemma}{Lemma}
\begin{document}
\firstpage{1}

\title[Cancer Diagnosis with QUIRE]{Cancer Diagnosis with QUIRE: QUadratic Interactions among infoRmative fEatures}
\author[Chowdhury \textit{et~al}]{Salim Akhter Chowdhury\,$^{1}$, Yanjun Qi\,$^{2}$, Alex Stewart\,$^{3}$, Rachel Ostroff\,$^{3}$ and Renqiang Min\,$^{2}$\footnote{to whom correspondence should be addressed}}

\address{$^{1}$Lane Center for Computational Biology, Carnegie Mellon University, Pittsburgh, PA, 15213, USA.\\
$^{2}$Department of Machine Learning, NEC Laboratories America, Princeton, NJ, 08540, USA. \\
$^{3}$SomaLogic, Inc., 2945 Wilderness Pl., Boulder, CO 80301, USA
}

\history{Received on XXXXX; revised on XXXXX; accepted on XXXXX}

\editor{Associate Editor: XXXXXXX}

\maketitle

\begin{abstract}
\noindent\textbf{Motivation:} Responsible for many complex human diseases including cancers, disrupted or abnormal gene interactions can be identified through their expression changes correlating with the progression of a disease. However, the examination of all possible combinatorial interactions between gene features in a genome-wide case-control study is computationally infeasible as the search space is exponential in nature.

\noindent\textbf{Results:} In this paper, we propose a novel computational approach, QUIRE, to identify discriminative complex interactions among informative gene features for cancer diagnosis. QUIRE works in two stages, where it first identifies functionally relevant feature groups for the disease and, then explores the search space capturing the combinatorial relationships among the genes from the selected informative groups. Using QUIRE, we explore the differential patterns and the interactions among
informative gene features in three different types of cancers, Renal Cell Carcinoma(RCC), Ovarian Cancer(OVC) and Colorectal Cancer (CRC). Our experimental results show that QUIRE identifies gene-gene interactions that can better identify the different cancer stages of samples and can predict CRC recurrence and death from CRC more successfully, as compared to other state-of-the-art feature selection methods.
A literature survey shows that many of the interactions identified by QUIRE
play important roles in the development of cancer.
\end{abstract}
\vspace{-0.2 in}
\section{Introduction}
\vspace{-0.01 in}

In this paper, we focus on the task of cancer diagnosis using molecular signatures,  
such as gene expression measured using microarray experiments or protein expression  
values measured in blood.  
In the past decade, advancement in the genome-wide monitoring of
gene expression has enabled the scientists to look into the interplay among multiple genes
and their products. Differential analysis of gene expression
helps identification of individual genes that show altered behavior
in the phenotype of interest. Previous systematic analysis of gene expression
has enabled identification of single gene markers implicated in different
types of cancers such as breast cancer \citep{perou2000},  
lung cancer \citep{beer2002}, and prostate cancer \citep{lapointe2004}.

Complex diseases like cancers are the results of multiple genetic and epigenetic
factors. Although single gene markers can provide valuable information
about the process under study,
a major problem with these markers is that they
offer limited insight into the complex
interplay among the molecular factors responsible for progression
of complicated diseases, like cancers. So, recently, research
focus in complex diseases shifts towards the identification of multiple
genes that interact directly or indirectly in contributing their association
to the target disease. Several complex interactive partners from previous studies
have been shown to give
important insight into the mechanism of breast cancer \citep{Chuang2007}
and colorectal cancer \citep{Chowdhury2011}.

However, due to the combinatorial nature of the problem,
the identification of groups of genes that show differential
behavior in the manifestation of complex phenotypes is computationally
infeasible. For instance, for a set of $30,000$ genes, there are about $4500$ million possible gene-gene interactions in the search space.  
Several recent methods propose to reduce the search space using orthogonal prior knowledge
about connections amongst the genes, such as interactions collected from protein-protein
interaction (PPI) network \cite{Lee and Xing, Bioinformatics, Vol. 28 ISMB 2012} or grouping information from functional annotations of proteins.  
One notable computational method named Group Lasso, which is proposed by \citealp{Yuan2006}, incorporates such prior interaction or grouping among the genes  
to detect gene groups that contribute to human disease,   
by enforcing sparsity at the group level in a supervised regression framework.
Group Lasso is extended
by \citealp{Jacob2009} to a more general setting that incorporates groups whose
overlaps are nonempty. Such overlaps in groups is biologically
more significant, because many genes participate in multiple pathways
and manifest themselves in several biological processes.

Although Group Lasso is very useful in identifying biologically
relevant groups of genes and proteins, they cannot capture complex
combinatorial relationships among the features within and across the groups.
Also, current PPI network data is inherently noisy due to experimental constraints
\citep{yu2008high}. Algorithmic approaches based solely on the noisy prior information can
result in many false positive interactions which are absent in the real genome space.

In this paper, our main goal is to identify the complex combinations of pairwise interactions among the genes that might help us (1) better diagnosis and prognosis of different types of cancer, and (2) gain novel insights into the mechanistic basis of the diseases. Since the total number of possible pairwise human gene interactions is huge, it is computationally infeasible to
examine all possible combinatorial combinations of them when trying to understand their relevance to the phenotype under consideration. Due to this ``Curse of Dimensionality'' issue, our first target is to reduce the dimensionality of the search space in such a manner that it enables us to identify informative interacting gene partners in a reasonable
limit of time and memory space. This reduced search space then enables us to look for combinations of interacting
pairs of informative genes in a more practical sparse learning setting.

In this paper, we propose a two-stage solution, named as QUIRE, i.e. to detect QUadratic Interactions among infoRmative fEatures
which show differential behavior for diagnosing a target disease using molecular signatures.
In the first stage of our proposed approach, we use Overlapping
Group Lasso \citep{Jacob2009} to identify biologically relevant informative feature groups and physical gene interaction groups that exhibit differential patterns for the studied disease. Then
in the second stage, we search exhaustively on this reduced feature space
by examining all possible pairs of interacting features to identify the combination of  
markers and complex patterns of feature interactions that are informative about the phenotypes in a sparse learning framework.
Systematic experimental results have been obtained on protein and gene expression data from three different types of cancers, Renal Cell Carcinoma (RCC), Ovarian Cancer (OVC) and
Colorectal Cancer (CRC). Compared with other state-of-the-art feature selection methods, the results show that QUIRE can discover complementary sets of markers and pairwise interactions that can better classify samples from different stages of cancer and predict post-cancerous events, like cancer
recurrence and death from cancer with higher accuracy.

Broadly speaking, the proposed method also connects to statistical approaches that identify
gene-gene interactions on genome-wide association studies, reviewed in
\citep{cordell2009detecting}. For example, Wu et al. have proposed \citep{Wu2009} a Lasso-penalized logistic regression to identify
pairwise interactions from genome-wide association data sets.
The limitation of their approach is that it only works on SNP data and thus
restricts the predictors only to three relevant values -1, 0 and 1. As a result, their
method is not computationally feasible and not directly applicable to the real-valued
gene expression data. To the best of our knowledge, QUIRE is the first proposed method to identify combinatorial patterns among the pairs of informative genes for studying complex diseases, like cancer. Subsequent functional analysis of the  
interactions identified by QUIRE reveals that it can indeed identify genes relevant to the progress of diseases under study.

\vspace{-.3in}
\begin{methods}
\section{Methods}

Mathematically, the identification of single gene markers in a genome-wide study is an ill-posed problem.  
This is due to the fact that the number of genes in human cells are much more in numbers than the number  
of samples that are available for these kind of studies. For such
problems, Lasso, proposed by \citealp{Tibshirani1996} is very
popular for selecting a  small number of features relevant to
the problem under study. When a set of features are highly correlated
to each other, Lasso selects one from that set randomly, ignoring
others. So, in our current setting, there is a possibility that Lasso leaves out biologically
relevant genes from its set of selected informative features.

In this paper, we consider a linear regression setting. Suppose we have a data set $D$ containing $n$ observations $({\mathbf x}^{(i)}, y^{(i)})$ with response variable $y \in R$ and feature vector ${\mathbf x} \in R^p$, where $i \in \{1, \ldots, n\}$,  and we assume that features are standardized with zero mean and unit standard deviation (we will prove the effects of feature standardization on linear classifiers later in this section) and the $y$s are centered in $D$. The Lasso approach optimizes the following objective function,
\begin{eqnarray}
\ell({\mathbf w}) & = & \sum \limits_{i=1}^n (y_i - \sum \limits_{j=1}^p w_j x^{i}_{j})^2, \nonumber \\
\ell_{lasso}({\mathbf w})  & = & \ell({\mathbf w}) + \lambda \sum \limits_{j=1}^p \lvert w_j \rvert,
\end{eqnarray}
where $\ell({\mathbf w})$ is the loss function of linear regression, and ${\mathbf w}$ is the weight parameter. The $\ell_1$ norm penalty in lasso induces sparsity in the weight space for selecting features. It is obvious that the sum of the least squared errors and the $\ell_1$ norm are convex functions with respect to the weights ${\mathbf w}$, thus, we have the following lemma,
\begin{lemma}
Lasso-penalized linear regression has global optimum for any fixed penalty coefficient $\lambda$.
\end{lemma}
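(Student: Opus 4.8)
The plan is to combine two standard facts of convex analysis: a convex function on $\mathbb R^p$ that is \emph{coercive} attains its infimum, and for a convex function every local minimizer is a global minimizer. The excerpt has already noted that $\ell(\mathbf w)$ and $\sum_{j}|w_j|$ are convex in $\mathbf w$; since $\lambda \ge 0$, the Lasso objective $\ell_{lasso}(\mathbf w) = \ell(\mathbf w) + \lambda\sum_j |w_j|$ is a nonnegative combination of convex functions and hence convex, finite, and continuous on all of $\mathbb R^p$.

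First I would spell the convexity out concretely: writing $\ell(\mathbf w) = \|\mathbf y - X\mathbf w\|_2^2$ for the $n\times p$ design matrix $X$, its Hessian $2X^{\top}X$ is positive semidefinite, so $\ell$ is convex; each $w_j \mapsto |w_j|$ is convex, so is their sum, and weighting by $\lambda\ge 0$ and adding preserves convexity. Second --- and this is the step that actually uses $\lambda > 0$ --- I would establish coercivity: since $\ell(\mathbf w)\ge 0$ and $\|\mathbf w\|_1 \ge \|\mathbf w\|_2$, we have $\ell_{lasso}(\mathbf w) \ge \lambda\|\mathbf w\|_2 \to \infty$ as $\|\mathbf w\|_2 \to \infty$. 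Hence every sublevel set $\{\mathbf w : \ell_{lasso}(\mathbf w) \le c\}$ is bounded, and it is closed by continuity, therefore compact; taking $c = \ell_{lasso}(\mathbf 0)$ makes it nonempty, so the Weierstrass extreme value theorem yields a minimizer $\mathbf w^{\star}$ on this set, which is then a global minimizer over $\mathbb R^p$. Finally, convexity guarantees that any local optimum (e.g.\ a stationary point returned by a descent/subgradient method) attains the same global optimal value.

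I expect the main obstacle to be the existence half rather than the convexity half: convexity alone does not force the infimum to be attained, so the argument must exploit $\lambda > 0$ to get coercivity (equivalently, nonempty compact sublevel sets). Two remarks I would include: (i) the case $\lambda = 0$ degenerates to ordinary least squares, where a minimizer still exists because the range of $X$ is a closed finite-dimensional subspace onto which $\mathbf y$ can be projected; and (ii) the lemma claims existence of a global optimum, not uniqueness of $\mathbf w^{\star}$ --- uniqueness would further require $X^{\top}X \succ 0$ or a finer argument on the active set, which I would not assert here.
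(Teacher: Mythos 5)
Your proof is correct, and it is in fact more complete than the paper's own argument. The paper dispenses with this lemma in a single sentence --- it observes that the squared-error loss and the $\ell_1$ penalty are both convex in $\mathbf w$ and concludes directly that a global optimum exists, with no separate existence step. You follow the same convexity route but correctly identify that convexity alone only guarantees that any local minimizer is global; it does not guarantee that the infimum is \emph{attained}. Your coercivity argument (using $\ell_{lasso}(\mathbf w) \ge \lambda \|\mathbf w\|_1 \ge \lambda\|\mathbf w\|_2$ for $\lambda>0$ to get compact sublevel sets, then Weierstrass) supplies exactly the missing existence half, and your remark that the $\lambda=0$ case needs a separate projection argument is also right. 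The only caveat is that the lemma as stated says ``for any fixed penalty coefficient $\lambda$,'' which presumably includes $\lambda=0$; your parenthetical remark (i) covers that case, so nothing is lost. In short: same key fact (convexity), but you buy a rigorous attainment argument that the paper simply asserts.
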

According to Lemma 1, Lasso has global optimum, which can be found by any convex optimization technique. The coordinate descent approach proposed in \citealp{Friedman2009,VanderKooij2007} sets the gradient of the loss function $l_{lasso}({\mathbf w})$ to 0 to solve each weight $w_j$ iteratively, and it is among one of the most computationally efficient methods.
\begin{equation}
w_j = S(\frac{1}{n} \sum \limits_{i=1}^n x^{(i)}_{j} (y^{(i)} - \sum_{k \neq j}w_k x^{(i)}_j), \lambda)_+,
\end{equation}
where $S(z, \lambda)_+$ is a soft-thresholding operator. The value of $S(z, \lambda)_+$  is $z - \lambda$ if  $z>0$ and $\lambda < \lvert z \rvert$, $z + \lambda$ if  $z<0$ and $\lambda < \lvert z \rvert$, and 0 if $\lambda \geq \lvert z \rvert$.

In spite of the computational efficiency and the popularity of Lasso for feature selection, its formulation prevents it from capturing any prior information on possible group structures among the features. Group Lasso \citep{Yuan2006} proposed using $\ell_{2,1}$ penalty to select groups of input features which are partitioned into non-overlapping groups. The group penalty is the sum of the $\ell_2$ norm on the features belonging to the same group.
Overlapping Group Lasso \citep{Jacob2009} extends Group Lasso to handle groups of features with overlapping group members by duplicating input features belonging to multiple groups in the design matrix. Because a lot of real applications involve overlapping feature groupings, Overlapping Group Lasso is a more natural choice than Group Lasso. Suppose that we partition $p$ features in data set $D$ into $q$ overlapping groups $G = \{g_1, g_2, \ldots, g_q\}$, the following objective function is minimized in \citealp{Jacob2009} and \citealp{friedman2010note},
\begin{equation}\label{oglasso}
\ell_{oglasso} = \ell({\mathbf w}) + \lambda \sum_{g \in G}|| {\mathbf w}_g ||_2,
\end{equation}
where $\lambda$ is the regularization parameter, ${\mathbf w}_g$ denotes the set of weights associated with features in group $g$, and $||\cdot||_2$ is the Euclidean norm. The above optimization problem is separable, so we can use block coordinate descent to optimize the weights associated with each group $g$ separately. The subgradient of the optimization takes the following form,
\begin{equation}
- \sum_{i=1}^n {x}_g^{(i)T} (y^{(i)} - \sum_{g^\prime} {\mathbf w}_{g^\prime} x^{(i)}_{g^\prime}) + \lambda \frac{{\mathbf w}_g}{||{\mathbf w}_g||} = 0; \forall g \in G.
\end{equation}
Therefore, if $||\sum_{i=1}^n {x}_g^{(i)T} (y^{(i)} - \sum_{g^\prime \neq g} {\mathbf w}_{g^\prime} x^{(i)}_{g^\prime})|| < \lambda$, then ${\mathbf w}_g$ = 0; otherwise, ${\mathbf w}_g$ can be obtained by solving several one-dimensional optimization problems based on coordinate descent. In details,  let $Z^{(i)} = x^{(i)}_g$ = $(Z^{(i)}_1, \ldots, Z^{(i)}_k)$, ${\mathbf w}_g$ = ${\mathbf \theta}$ = $(\theta_1, \ldots, \theta_k)$, and residual  $r^{(i)} = y^{(i)} -  \sum_{g^\prime \neq g} {\mathbf w}_{g^\prime} x_{g^\prime}^{(i)}$, then $\theta_j$s of ${\mathbf w}_g$ can be solved by minimizing the following objective function,
\begin{equation}
\frac{1}{2}\sum_{i=1}^n (r^{(i)} - \sum_{j=1}^k Z^{(i)}_j \theta_j)^2 + \lambda ||{\mathbf \theta}||_2.
\end{equation}
The final solution of the overlapping group lasso is obtained by iterating the above optimization procedure over each feature group $g$ until convergence.

\begin{figure}[htp]
\centering
\includegraphics[height=2.8in,width = 1.8in]{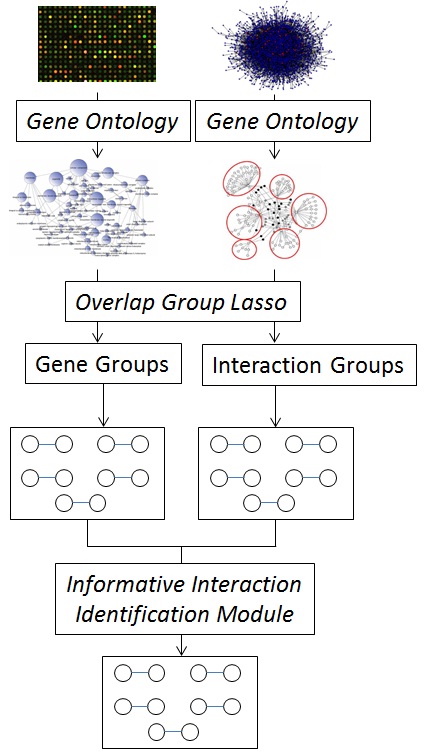}
\caption{Working model of QUIRE. QUIRE takes as input, gene or protein
expression levels of a set of samples, disease status of those samples and
physical interactions amongst the gene products. Then it uses gene ontology based
functional annotation to group the genes and cluster the interaction
network. Overlapping group lasso is run next on the expression and
interaction space to identify informative set of genes and interactions.
QUIRE then enumerates all pairwise binary interactions amongst the selected
gene features. Finally the proposed novel objective function is applied on the selected
single gene features, the informative protein protein interactions and the quadratic interactions amongst these genes to identify
the final set of interactions and gene markers.
}
\label{fig:met}
\vspace{-0.20 in}
\end{figure}

Although considering grouping structure among input features is very important for feature selection, Overlapping Group Lasso only encourages sparsity at the feature group level and there is no sparsity penalty within feature groups. Therefore, Overlapping Group Lasso often outputs a much larger number of selected features than Lasso.  Furthermore, Lasso and Overlapping Group Lasso only consider single gene features for prediction, which is very limited for disease status prediction and biomarker discovery.

For cancer diagnosis and biomarker discovery from blood samples or tissue samples, we plan to consider all possible combinations of single gene features and quadratic gene interaction features. Ideally, we want to optimize the following optimization problem to identify discriminative features given the dataset $D$,
\begin{eqnarray}
\ell({\mathbf w}, {\mathbf U}) & = & \sum \limits_{i=1}^n (y^{(i)} - \sum \limits_{j=1}^p w_j x^{(i)}_{j} - \sum \limits_{j=1}^{p-1} \sum \limits_{k=j+1}^p U_{jk} x^{i}_{j} x^{i}_{k} \nonumber \\
& & + \lambda_1 \sum \limits_{j=1}^m \lvert w_j \rvert + \lambda_2 \sum \limits_{j=1}^{p-1} \sum \limits_{k=j+1}^p \lvert U_{jk} \rvert.
\end{eqnarray}
However, the above model has $O(p^2)$ features and is not applicable to genome-wide biomarker discovery studies. Provided that the training data is often very limited, it is almost impossible to identify the discriminative single or quadratic interaction features by solving the above optimization problem. We propose QUIRE (QUadratic Interactions among infoRmative fEatures) to address these challenges, which is based on Overlapping Group Lasso and Lasso. And it takes advantage of both of these feature selection methods.

The underlying idea of QUIRE is to incorporate all possible complementary biological knowledge into the above infeasible optimization problem to reduce search space. By restricting discriminative gene interactions to happen only between genes in some informative gene groups, we can use existing functional annotations of input genes to identify these groups thereby to throw away a lot of interaction terms during the optimization. In addition, available physical interactions between the protein products of input genes can also be used to cut the search space, although discriminative gene feature interactions for prediction do not always necessarily correspond to physical interactions. The general working model of QUIRE is shown in Figure~\ref{fig:met}.
In details, QUIRE takes the expression profile of $n$ samples over $p$ genes (proteins),
the physical interactions among the genes products (i.e. protein protein interaction network)
and the disease status of these samples as input,  and it outputs a (small) set of discriminative genes and gene interactions with corresponding learned weights for predicting the disease status of any incoming test sample.  
The step by step working model of QUIRE is given below:

\begin{enumerate}

\item \textit{Functional group generation:}

\begin{enumerate}

\item QUIRE groups the $p$ input gene features into $q$ overlapping functional categories according to the existing Gene Ontology (GO) based functional annotations, such as Cellular Colocalization (CC), Molecular Function (MF), and Biological Process (BP).

\item QUIRE clusters the given interaction network (i.e. PPI) into subsets of overlapping gene
products based on GO functional annotations, CC, MF and BP.

\end{enumerate}

\item \textit{Informative genes and functional interactions selection:}
\begin{enumerate}
\item Given the GO functional grouping of input gene features, Overlapping Group Lasso is run to select $m$ top discriminative genes for disease status prediction according to the absolute values of the learned weights of gene features.
\item Overlapping group lasso is run on the clustered interaction network to select
informative groups of protein-protein interactions. In this case, each cluster
is considered as a group
and quadratic interactions (discussed later) among the interacting proteins in a
group are used as expression.
\end{enumerate}

\item \textit{Selection of most informative interactions and genes:} QUIRE first enumerates all possible quadratic feature interactions among the informative genes
selected at step 2(a). Then it takes these quadratic interactions, single informative gene features and
the informative functional interactions identified at step 2(b) as input and it outputs the final selected gene interactions and single genes as biomarkers.
\end{enumerate}

In order to identify the discriminative combinations of single gene features and quadratic interactions among pairwise informative genes, we define our proposed objective function for Lasso as follows,
\begin{eqnarray}
\ell({\mathbf w}, {\mathbf U},{\mathbf R}) & = & \sum \limits_{i=1}^n (y^{(i)} - \sum \limits_{j=1}^m w_j x^{(i)}_{j} - \sum \limits_{j=1}^{m-1} \sum \limits_{k=j+1}^m U_{jk} x^{i}_{j} x^{i}_{k} - \sum \limits_{l=1}^r R_l I_{l})^2 \nonumber \\
& & + \lambda_1 \sum \limits_{j=1}^m \lvert w_j \rvert + \lambda_2 \sum \limits_{j=1}^{m-1} \sum \limits_{k=j+1}^m \lvert U_{jk} \rvert + \lambda_3 \sum \limits_{l=1}^r \lvert R_l \rvert,
\end{eqnarray}

where $j$ and $k$ index the seed informative genes and $l$ indexes the informative protein protein interactions selected by the Overlapping Group Lasso in the previous step. The objective function contains $\ell_1$ penalties at single informative gene level, and pairwise gene interaction and protein interaction level. The intuition behind this formulation is that it captures the interactions that are complementary to the individual informative genes. Because it is computationally infeasible to consider every pair of interaction in a genome wide case control study, QUIRE reduces the search space by using the features that are selected by Overlapping Group Lasso as the informative ones, and then it relies on Lasso with $\ell_1$ penalties to identify the discriminative combination of informative individual gene features and gene interaction features, which provides an approximation to the problem of searching an exponential number ($O(2^{p+p^2})$) of all possible combinations of single features and pairwise interaction features.  

As we mention earlier in this paper, we perform feature standardization before running Lasso or Group Lasso. Instead of using the original quadratic interactions $x_j x_k$ between pairwise variables $x_j$ and $x_k$, we standardize $x_j x_k$ by $g(x_j x_k)$ as input feature, where $g(x) = \frac{ x - \mu }{ \sigma}$, and $\mu$ and $\sigma$ are respectively the mean and standard deviation of feature $x$. As shown in Lemma 2 and Lemma 3, we see that feature standardization has nice properties when running Lasso, and quadratic feature interactions calculated by $g(x_j x_k)$ is more sensible than $g(x_j) g(x_k)$ for biomarker discovery because it does not have weight sharing constraints involving both gene interaction features and  single gene features. Moreover, $g(x_j) g(x_k)$ can result in inaccurate calculations because the product of two large negative values for normalized features is a large positive value, which is not desirable in most applications. The advantage of $g(x_j x_k)$ over $g(x_j) g(x_k)$ is supported by our experimental results in this paper.

\begin{lemma}
The solution of Lasso-penalized linear regression on standardized input features with one fixed penalty coefficient $\lambda$ is equivalent to the solution of a Lasso problem on original input features with adaptive penalty coefficients for different weights being $\lambda$ weighted by the standard deviations of different corresponding original features.
\end{lemma}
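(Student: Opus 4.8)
The plan is to produce an explicit bijective change of variables between the weights of the Lasso fitted on standardized features and the weights of a suitably reweighted Lasso fitted on the original features, and then to verify that this map carries one objective to the other up to a term that does not involve any penalized weight.

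First I would write the standardized feature as $\tilde{x}^{(i)}_j = (x^{(i)}_j - \mu_j)/\sigma_j$ with each $\sigma_j > 0$, and substitute this into the standardized Lasso objective $\sum_{i=1}^n (y^{(i)} - \sum_{j=1}^p \tilde{w}_j \tilde{x}^{(i)}_j)^2 + \lambda \sum_{j=1}^p |\tilde{w}_j|$. Collecting the linear predictor gives $\sum_j \tilde{w}_j \tilde{x}^{(i)}_j = \sum_j (\tilde{w}_j/\sigma_j)\, x^{(i)}_j - \sum_j \tilde{w}_j \mu_j/\sigma_j$, so after the substitution $w_j := \tilde{w}_j/\sigma_j$ the fitted value at sample $i$ equals $\sum_j w_j x^{(i)}_j - c$, where $c := \sum_j w_j \mu_j$ is the same constant for every $i$.

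Next I would dispose of this constant offset: $c$ is exactly an intercept, and since the intercept is never penalized --- equivalently, since the $y^{(i)}$ are centered in $D$ so that the optimal intercept absorbs $c$ --- the minimization over the slope weights is unaffected by it. Granting this, the loss term is precisely the ordinary least-squares loss of the original-feature model at $\mathbf{w}$, while the penalty becomes $\lambda \sum_j |\tilde{w}_j| = \lambda \sum_j \sigma_j |w_j|$, i.e. an $\ell_1$ penalty on the original weights with feature-dependent coefficients $\lambda_j = \lambda \sigma_j$. Since $w_j = \tilde{w}_j/\sigma_j$ is a linear bijection of $R^p$ (all $\sigma_j > 0$), the minimizers are in one-to-one correspondence: $\tilde{\mathbf{w}}^\star$ solves the standardized Lasso if and only if $w^\star_j = \tilde{w}^\star_j/\sigma_j$ solves the adaptively weighted Lasso on the original features, which is exactly the asserted equivalence. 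If desired, Lemma 1 guarantees these optima exist, and uniqueness, when it holds, transfers along the bijection as well.

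I expect the only genuine obstacle to be the correct bookkeeping of the mean-shift term $c$: one must either explicitly include an unpenalized intercept in the original-feature problem or keep the responses centered and note that the fitted intercept absorbs $c$ with no effect on the slopes. Once that point is settled, the remainder is the routine algebra of the change of variables, together with the convexity facts already invoked for Lemma 1.
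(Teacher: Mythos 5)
Your proposal is correct and follows essentially the same route as the paper's proof: the change of variables $w'_j = \tilde{w}_j/\sigma_j$, verification that the squared-error loss is preserved and the penalty becomes $\lambda\sigma_j|w'_j|$, and transfer of global optimality along the bijection. The only difference is that you explicitly account for the mean-shift constant $c=\sum_j w_j\mu_j$ via an unpenalized intercept or centering, a bookkeeping point the paper's proof passes over silently; this makes your argument slightly more careful but does not change the approach.
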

\begin{proof}
Suppose that the optimal solution of lasso-penalized linear regression on standardized features takes the following linear form, $f(x; {\mathbf w}) = \sum_j w_j^T g(x_j) = \sum_j \frac{w_j}{\sigma_j }x_j - \sum_j w_j \frac{\mu_j}{\sigma_j}$, in which ${\mathbf w}$ minimizes $\ell({\mathbf w}) = \sum_i ||y^{(i)} - f(g(x^{(i)}); w)||^2 + \lambda \sum_j |w_j|$. Let $w^{\prime}_j = \frac{w_j}{\sigma_j}$, we can show that $w^{\prime}$ is the global optimal of $\ell^{\prime}({\mathbf w}^{\prime}) = \sum_i ||y^{(i)} - \sum_j w^{\prime}_j x^{(i)}_j ||^2 + \sum_j \lambda \sigma_j |w^{\prime}_j|$. Because $x_j = \sigma_i g(x_j) + \mu_j$, $\ell(w^{\prime}) = \sum_i || y^{(i)} - \sum_j \frac{w_j}{\sigma_j} (\sigma_i g(x_j) + \mu_j) ||$ + $\sum_j \lambda \sigma_j \frac{w_j}{\sigma_j}$ = $\sum_i ||y^{(i)} - f(g(x^{(i)}); w)||^2 + \lambda \sum_j |w_j|$ = $ \ell ({\mathbf w})$. Therefore, if ${\mathbf w}$ is a global optimum of $\ell({\mathbf w})$, then ${\mathbf w}^{\prime}$ must be a global optimum of $\ell^{\prime}({\mathbf w}^{\prime})$, which proves the lemma.
\end{proof}

\begin{lemma}
In the setting of Lasso-penalized linear regression, our proposed quadratic feature interaction $g(x_j x_k)$ has different effect compared to $g(x_j) g(x_k)$. $g(x_j x_k)$ only constrains original feature interactions $x_j x_k$ while $g(x_j) g(x_k)$ results in weight sharing constraints involving both interaction features and single features.
\end{lemma}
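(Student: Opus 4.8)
The plan is to make the informal phrase ``weight sharing constraint'' precise by writing down, in each of the two cases, the \emph{effective} model that the Lasso fit induces when it is re-expressed in terms of the original monomials $\{x_j\}_j\cup\{x_jx_k\}_{j<k}$, and then comparing the linear maps that send the penalized weights to the effective coefficients. The claim is then that this map is block diagonal (between the single-feature block and the interaction block) for $g(x_jx_k)$, but genuinely couples the two blocks for $g(x_j)g(x_k)$.

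First I would record the two elementary expansions. Writing $\mu_{jk},\sigma_{jk}$ for the empirical mean and standard deviation of the raw product $x_jx_k$, standardization gives
\[
g(x_jx_k)=\frac{1}{\sigma_{jk}}\,x_jx_k-\frac{\mu_{jk}}{\sigma_{jk}},
\]
an affine function of the single monomial $x_jx_k$ only, whereas
\[
g(x_j)g(x_k)=\frac{(x_j-\mu_j)(x_k-\mu_k)}{\sigma_j\sigma_k}=\frac{1}{\sigma_j\sigma_k}\Big(x_jx_k-\mu_k x_j-\mu_j x_k+\mu_j\mu_k\Big),
\]
an affine function of $x_jx_k$, $x_j$ and $x_k$ simultaneously. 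Next I would substitute these into the predictor $\sum_j w_j\,g(x_j)+\sum_{j<k}U_{jk}\,g(x_j)g(x_k)$ (handling the single-feature part exactly as in Lemma 2) and collect terms by monomial. In the $g(x_jx_k)$ case the constants $-U_{jk}\mu_{jk}/\sigma_{jk}$ are absorbed by the centering of $y$, the effective coefficient of $x_jx_k$ is simply $U_{jk}/\sigma_{jk}$, and the effective coefficient of $x_j$ stays $w_j/\sigma_j$; so, by the same reparametrization argument used for Lemma 2, running Lasso on the features $g(x_jx_k)$ is equivalent to running it on the raw features $x_jx_k$ with per-coordinate penalties $\lambda_2\sigma_{jk}$, with no interaction with the $w_j$. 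In the $g(x_j)g(x_k)$ case, collecting terms gives effective coefficient $U_{jk}/(\sigma_j\sigma_k)$ on $x_jx_k$ but effective coefficient
\[
\frac{w_j}{\sigma_j}-\sum_{k\neq j}\frac{\mu_k}{\sigma_j\sigma_k}\,U_{jk}
\]
on $x_j$ (and symmetrically on $x_k$). Thus each penalized weight $U_{jk}$ drives both an interaction coefficient and the single-feature coefficients of $x_j$ and $x_k$: these effects cannot be chosen independently, and the penalty $\lambda_2|U_{jk}|$ shrinks all of them at once — that is precisely the ``weight sharing'' the lemma refers to.

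I would close with the side remark already in the text — that if $x_j,x_k$ both lie far below their means then $g(x_j)g(x_k)$ is large and positive, so $g(x_j)g(x_k)$ does not even behave monotonically as an interaction signal — but this is a motivating comment, not part of the formal argument. The only real obstacle is definitional: the statement is informal, so the substantive step is committing to the right notion of ``weight sharing'' (namely non–block-diagonality of the change-of-variables matrix), after which everything reduces to the two one-line expansions above plus a reuse of Lemma 2. A minor point to be careful about is the bookkeeping of the absorbed constants and invoking the centering/intercept assumption consistently for both feature constructions so the comparison is fair.
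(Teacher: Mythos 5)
Your proposal is correct and rests on the same central observation as the paper's own proof: expanding $g(x_j)g(x_k)=\frac{x_jx_k-\mu_k x_j-\mu_j x_k+\mu_j\mu_k}{\sigma_j\sigma_k}$ shows it injects linear terms in $x_j$ and $x_k$ alongside the quadratic monomial, whereas $g(x_jx_k)$ is affine in $x_jx_k$ alone, and both arguments then invoke Lemma 2 to reinterpret the penalty on the raw interaction features. Your write-up is in fact more explicit than the paper's --- you actually compute the induced effective coefficients and formalize ``weight sharing'' as non--block-diagonality of the change of variables, where the paper only asserts it --- and the only ingredient of the paper's proof you omit is its ancillary remark that $\mathrm{corr}(g(x_jx_k),x_l)=\mathrm{corr}(x_jx_k,x_l)$ while $g(x_j)g(x_k)$ does not preserve this correlation.
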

\begin{proof}
First, if $x_j x_k$ correlates with a feature $x_l$ , $g(x_j x_k)$ does not change the correlation coefficient, because $corr(x_j x_k, x_l)$ = $\frac{E[x_j x_k - \mu_{jk}] [x_l - \mu_l ]} {\sigma_{jk} \sigma_l}$ = $E[g(x_j x_k) g(x_l)]$ = $corr(g(x_j x_k), x_l)$. However, $corr(g(x_j) g(x_k) , x_l)$ is different from $corr(x_j x_k, x_l)$. Second, $g(x_j)g(x_k) = \frac{x_i x_j - \mu_i x_i - \mu_j x_j + \mu_i \mu_j}{\sigma_i \sigma_j}$ includes both a quadratic interaction term and linear terms, but $g(x_j x_k) = \frac{x_i x_j - \mu_{ij}}{\sigma_{ij}}$ only involves a quadratic interaction term. Therefore, linear classifier $\sum_{jk} w_{jk} g(x_j)g(x_k) + b$ puts a lot of constraints on weight sharing between quadratic interaction features and single features. Moreover, according to Lemma 2, the solution to Lasso on features $g(x_j x_k)$s is equivalent to the solution to the lasso on features $x_j x_k$s with adaptive penalty coefficients for different weights.
\end{proof}
\end{methods}

\vspace{-0.03 in}
\section{Results and Discussion}
\vspace{-0.01 in}

In this section, we perform comprehensive classification experiments to
evaluate the performance of our proposed method, QUIRE, in classifying
samples across different stages of Renal Cell Carcinoma (RCC), Ovarian Cancer
(OVC) and in predicting cancer recurrence and death due to cancer in Colorectal
Cancer (CRC). We compare the performance of QUIRE with state-of-the-art feature
selection techniques, Lasso, Overlapping Group Lasso and SVM. We then perform a
literature survey and enrichment analysis of the informative interactions selected by QUIRE and show that they are
relevant to the progression of the disease.

\subsection{Datasets}
\vspace{-0.02 in}

We use three datasets with samples from RCC, OVC and CRC for classification experiments. For RCC and OVC datasets, Blood samples are collected and Somamer (aptamer) based proteomic technology \citep{Gold2010} is used to measure the concentration of a selected set of marker proteins. The CRC samples belong to a publicly available microarray dataset collected from gene expression omnibus (GEO), and referenced by accession number $GSE17536$ \citep{Smith2010}. We give detailed description of the datasets below:  

\begin{enumerate}

\item RCC Dataset: This dataset contains $212$ RCC samples from Benign and $4$ different stages of tumor. Expression levels of $1092$ proteins are collected in this dataset. The number of Benign, Stage $1$, Stage $2$, Stage $3$ and Stage $4$ tumor
samples are $40, 101, 17, 24$ and $31$ respectively.

\item OVC Dataset: This dataset contains $845$ proteins' expressions for
$248$ samples across Benign and $3$ different stages of ovarian cancer. The number of Benign, Stage $1$, Stage $2$ and Stage $3$ tumor
samples are $134, 45, 8$ and $61$ respectively.

\item CRC dataset ($GSE17536$): This microarray dataset contains $177$ samples from $4$ different
stages (Stage $1$ to Stage $4$) of CRC. Expression levels of $20125$ genes are collected. Besides stage information,
this dataset also has records for each patient, the binary valued information of ``Cancer
Recurrence'' and ``Death from Cancer''. Out of $177$ patients, $55$ had recurrence of cancer and
$68$ died from cancer.

\end{enumerate}

\subsection{Grouping of Features}
\vspace{-0.02 in}
In order to group the genes using gene ontology terms, we use the web based tool
``Database for Annotation, Visualization, and Integrated Discovery'' (DAVID,
\texttt{http://david.abcc.ncifcrf.\\gov/})\citep{dennis2003david}. There are a set of parameters that can be
adjusted in DAVID based on which the functional classification is done.
This whole set of parameters is controlled by a higher level parameter
``Classification Stringency'', which determines how tight the resulting
groups are in terms of association of the genes in each group.
In general,  a ``High'' stringency setting generates less number of
functional groups with the member genes tightly associated and more
genes will be treated as irrelevant ones into an unclustered group.
We set the stringency level to ``Medium"
which results in balanced functional groups where the association of
the genes are moderately tight. For the CRC dataset, functional classification
using BP, CC and MF results in total $2434$, $520$ and $1146$ groups respectively.
The total number of groups for BP, CC and MF annotations on RCC and OVC
datasets are $890,56,155$ and $321,23,56$ respectively.

\subsection{Protein Protein Interactions}
\vspace{-0.02 in}
Besides using it for selecting informative single gene features,
we use Overlapping Group Lasso to select the informative protein
protein interactions.  
We download the binary protein protein interactions (PPI) data
from HPRD(\texttt{http://www.hprd.org/}). For each group $G_i$ in a particular functional
grouping of the genes (i.e. BP, CC or MF), we identify
the pairs of member genes of $G_i$ whose products interact directly with each other in the PPI network.
The set of all such pairs where both interacting partners are members of $G_i$ forms a group.    
For a pair of interacting genes
$x_j$ and $x_k$ in a group, we use their quadratic interaction term $g(x_j,x_k)$
as their expression level. Usage of the quadratic interaction formulation in Overlapping Group Lasso helps us to integrate the resulting informative protein protein interactions into the formulation of QUIRE directly without any transformation. Thus the total number of groups are same in the
case of interactions and single gene features. But the cardinality of each group and
the expression levels of the members are different.

\subsection{Experimental Design}
\vspace{-0.02 in}

In order to systematically evaluate the classification performance of QUIRE, we perform the following classification experiments:

\begin{enumerate}

\item Classification experiments using RCC samples: We perform three stage-wise binary classification experiments using RCC samples:

\begin{enumerate}

\item Classification of Benign samples from Stage $1-4$ samples.
\item Classification of Benign and Stage $1$ samples from Stage $2-4$ samples.
\item Classification of Benign, Stage $1,2$ samples from Stage $3,4$ samples.

\end{enumerate}

\item Classification experiments using OVC samples with intermediate levels of
\textit{CA125} as test set: \textit{CA125} is a well-known marker in ovarian cancer \citep{Suh2010}. Concentration of \textit{CA125} is used to measure the progression of the disease.
The suspicious cutoff level of \textit{CA125} is $40$, meaning that concentration level above $40$ of this marker might be indicative of OVC. But \textit{CA125} is not a good indicator of early detection of the disease onset, especially when the concentration of this biomarker is between $40$ and $100$.
So we use samples with \textit{CA125} concentration level between $40$ and $100$ as our test
set in this experiment. The remaining samples, with concentration of \textit{CA125} below $40$ and above $100$ are used as training set. We perform the following experiments:

\begin{enumerate}

\item Classification of Benign samples from Stage $1-3$ samples.
\item Classification of Benign, Stage $1$ samples from Stage $2, 3$ samples.
\item Classification of Benign, Stage $1, 2$ samples from Stage $3$ samples.

\end{enumerate}

\item Classification experiments using CRC samples: We perform two classification experiments using the samples with CRC:

\begin{enumerate}

\item Prediction of cancer recurrence: we build binary classifier to predict whether there is disease free survival in the follow-up time or not.

\item Prediction of death from colorectal cancer: we train binary classifier to predict if there is death from CRC across all pathological stages of the disease.

\end{enumerate}

\end{enumerate}

\subsection{Classification performance of QUIRE}
\vspace{-0.02 in}

In this section, we report systematic experimental results on classifying samples
from different stages of RCC and OVC and in predicting CRC recurrence and death
from CRC. In the first stage of QUIRE, we use Overlapping Group Lasso to
identify the biologically relevant groups of features and pairwise protein interactions, which in turn, is used in the subsequent stage to
explore the set of informative markers and quadratic interactions.
However, for the RCC and OVC datasets, we do not use protein protein
interactions for prediction purpose. This is because,
these datasets include only selected marker proteins distributed sparsely
across the protein interaction network and thus most of them do not interact
with each other directly.

After we run Overlapping Group Lasso on the gene groups, we sort the genes
based on the weight value assigned to it by the algorithm. We select
top $m = 200$ genes as input to QUIRE for enumerating all pairwise quadratic
interactions. We report the effect of $m$ on the classification performance of QUIRE later
in this section.
For classification of CRC samples, Overlapping Group Lasso on average selects
$1000$ protein protein interactions
as informative ones. We use this whole set of selected protein interactions
as input to QUIRE to be considered besides the paired quadratic interactions.
In this section, we present Overlapping
Group Lasso's and QUIRE's performance for features grouped
using Cellular Colocalization (CC) ontology, as this grouping strategy
shows best classification performance among the three we consider.

The predictive performance of the markers and pairwise interactions selected
by QUIRE is compared against the markers selected by Lasso, SVM and
Overlapping Group Lasso. We use glmnet \citep{Friedman2009} and LiblineaR \citep{Fan2008}
packages for implementation of Lasso and SVM respectively. We use the Group Lasso
implementation (with overlapping groups) from \citep{Jacob2009}. The overall performance
of the algorithms are shown in Figure~\ref{fig:res}. In this figure, we report average AUC score for ten
runs of five fold cross validation experiments for cancer stage prediction
in RCC (Figure~\ref{fig:res}(A)) and for predicting cancer recurrence and death from cancer in CRC(Figure~\ref{fig:res}(C)).
In five fold cross validation experiments, we divide the samples equally into five disjoint sets or folds.
We keep one fold for testing. On the remaining four folds, we use Overlapping Group Lasso to
identify the informative set of markers and protein protein interactions (for CRC). We train QUIRE on these four folds using these markers to identify the pairwise
interactions and markers and use the set-aside test set for prediction purpose. For each run, this procedure is repeated for each of the five folds and
average AUC score is reported for ten such runs. For OVC,
we report average AUC score (Figure~\ref{fig:res}(B)) for predicting the cancer stage of the samples
with intermediate levels of \textit{CA125} (concentration of \textit{CA125} is between $40$ and $100$) using the
remaining samples for training and informative feature selection.  

\begin{figure*}[htp]
\centering
\includegraphics[width = 0.9\textwidth]{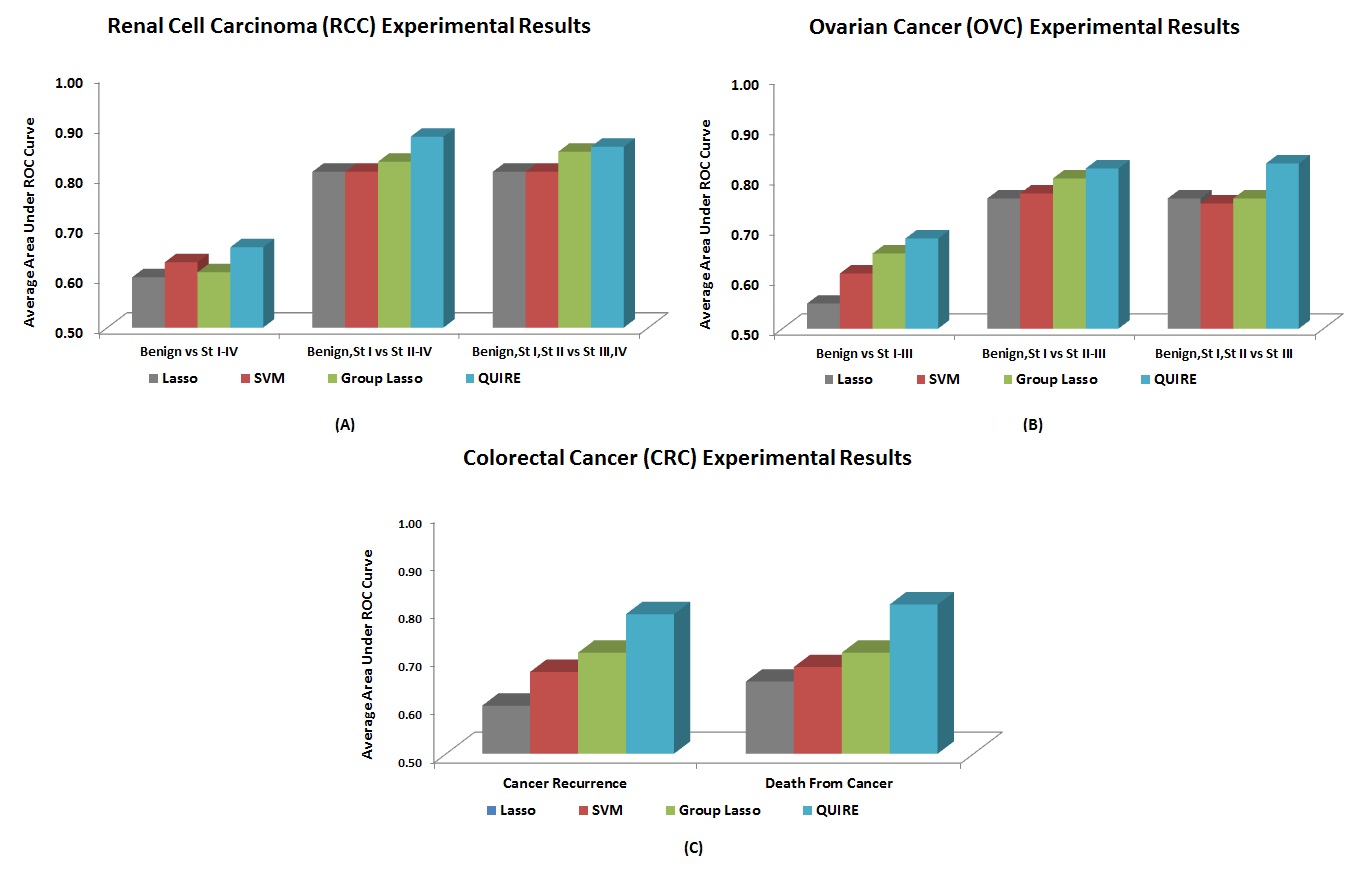}
\caption{Comparison of the classification performances of different
feature selection approaches with QUIRE in identifying the different stages of
(A)RCC , (B) OVC and (C) in predicting CRC recurrence and death from CRC.
In (A) and (C), five fold cross validation is repeated ten times and average
AUC score is reported. For (B), samples with \textit{CA125} marker's expression
level between $40$ and $100$ are used as test cases, while the remaining samples
are used for training. This experiment is also repeated ten times and average
AUC score is reported.
}
\label{fig:res}
\end{figure*}

In cancer stage prediction experiments for RCC and OVC,
we see from Figure~\ref{fig:res} that the combination of informative markers and
pairwise interactions identified by QUIRE show better classification performance
in every case, as compared to the
markers selected by Lasso, SVM and Overlapping Group Lasso.
For early detection of the diseases (classification of Benign, Stage $1$ vs. rest of the samples), QUIRE achieves average AUC scores of $0.88$ and $0.82$ for
RCC and OVC respectively. Overlapping group lasso shows next best
performance with average AUC scores of  $0.83$ and $0.80$ respectively.
Lasso and SVM, which do not use any grouping or interaction information amongst
the features, show the worst performance in all of the classification
tasks apart from one.
As QUIRE markers
show consistently better performance across all the stages of
RCC and OVC, they can be used for improved diagnosis and prognosis
of these two different types of cancers.
Also QUIRE helps better prediction of OVC progression for
samples with intermediate levels of \textit{CA125}; so it can be used by the physicians for early
detection of this disease.

From Figure~\ref{fig:res}(C), we can see that gene-gene interactions
help us better predict both CRC recurrence and death from CRC,
as compared to the other feature selection mechanisms.
In the events of cancer recurrence and death from cancer,
the average AUC values achieved by features selected with QUIRE
are $0.79$ and $0.81$ respectively, while markers identified by Overlapping Group Lasso
show the next best performance with average AUC value of $0.71$
in both of these categories. Markers identified by Lasso show the worst
performance in prediction of both of these events.
The performance
gap between QUIRE and the other three popular feature selection
techniques hint to the fact that QUIRE can identify interactions that
might help us better understand the mechanistic basis of CRC.

These experimental results show that QUIRE identifies markers and
interactions that complement each other in such a way that they not only
help better diagnosis and prognosis of cancer, but also can predict the
advanced events of recurrence of cancer and survival after cancer with higher
accuracy than other state-of-the-art algorithms. For each of these datasets,
identification of informative pairwise interactions using brute force enumerative technique is computationally impractical due to the huge dimensionality of the search space. QUIRE
helps reducing this space by a large margin. The total running time of QUIRE is
dominated by the Overlapping Group Lasso stage which takes around one hour to identify
biologically relevant groups of genes and protein interactions
in traditional desktop computers for the types of problems we study. After the
dimensionality is reduced, QUIRE exhaustively enumerates all the pairwise interactions and use
the protein interactions identified in the previous stage
on this low dimensional space in a couple of minutes.

\subsection{Informative QUIRE markers and interactions associated with cancer}
\vspace{-0.03 in}

Cancer is a genetic disease, which originates and develops through a process
of mutations.
Mutations in individual gene not only disrupts its own
function, but also
affects its interaction patterns with other genes.
As complex diseases like cancer is a result of dysregulation
in the interactions among the genes, researchers focus
on identifying those relevant interactions to gain more
insight into the molecular basis of the disease.
On the CRC dataset, QUIRE selects about $120$ quadratic
interactions on average as informative ones for both CRC recurrence
and death from CRC. On the other hand, the average number of markers selected
by Overlapping Group Lasso and Lasso on the same prediction tasks are about $1100$ and $150$ respectively.

An investigation of the pairwise interactions
identified by QUIRE on CRC dataset reveals that many of these interactions are indeed relevant
to the progression of cancer in general. Some of such
interactions identified for prediction of CRC recurrence include  \textit{JAK2 - LYN}\citep{samanta2009jak2}, Transforming growth
factor beta (\textit{TGF$\beta$}) - \textit{SMAD}\citep{grady2005transforming},
Epidermal growth factor receptor (\textit{EGFR}) - Caveolin (\textit{CAV})
\citep{dittmann2008radiation},
\textit{TP53} - TATA binding protein (\textit{TBP})\citep{crighton2003p53},
Connective tissue growth factor (\textit{CTGF}) - Vascular endothelial growth factor (\textit{VEGF})\citep{inoki2002connective}, Edoglin (\textit{ENG}) - Transforming growth
factor beta receptor (\textit{TGF$\beta$R})\citep{fonsatti2003endoglin}.
Further investigations of the interactions identified by QUIRE might  
reveal novel gene partners associated with cancer and thus lead
to testable hypothesis.

Disturbance in pairwise interactions among the genes affects the pathways in which they are located in.
Cancer pathways are a set of pathways dysregulations in which have been shown to be associated with
initiation and progression of the disease.  A pictorial view of the well-known cancer pathways can
be found in the KEGG database(\texttt{http://www.genome.jp/kegg/pathway/hsa\\/hsa05200.html})
\citep{kanehisa2012kegg}.
We perform a pathway enrichment analysis where we test if the
set of the markers and interactions identified by QUIRE on the CRC dataset reside in the cancer pathways.
As part of this experiment, we first use the partner genes identified by QUIRE
as part of the informative interactions while predicting CRC recurrence.
We use DAVID to identify the
statistically significant pathways that are enriched in these genes.
An investigation of the enriched
pathways returned by DAVID indicates that many of them are
indeed responsible for cancer or related to functions dysregulation
in which results in cancer. Some of such KEGG pathways include
Apoptosis(p-value $4.7$x$10^{-4}$), Focal adhesion(p-value $3$x$10^{-3}$),
Cell adhesion molecules(p-value $9.2$x$10^{-4}$),
p53 signaling pathway(p-value $1.3$x$10^{-2}$), Gap junction (p-value $1.3$x$10^{-2}$),
MAPK signaling pathway (p-value $4.5$x$10^{-2}$), ErbB signaling pathway (p-value $5.8$x$10^{-2}$), Cell cycle(p-value $6.6$x$10^{-2}$), Pathways in Cancer(p-value $7.2$x$10^{-4}$), Colorectal cancer(p-value $10^{-3}$).
Repeating the same analysis on the interacting partners identified
by QUIRE while predicting ``Death from CRC'' result in
identification of similar pathways (data not shown here).

Next we use the informative genes and their associated interactions discovered by QUIRE to
identify functional modules that might
be associated with pathways known to be dysregulated in cancer.
We use the web based tool Gene Mania (www.genemania.org)
\citep{warde2010genemania} to identify the statistically significant modules induced by
genes and interactions selected by QUIRE. Gene Mania also
returns the pathways and functions in which the identified
modules are significantly enriched.
After investigating these functional modules,
we find that many of them are enriched in
the well-known cancer pathways. Examples of such pathways include Focal adhesion
pathway (p-value $2$x$10^{-3}$), Jak-STAT signaling pathway (p-value $3$x$10^{-2}$), MAPK signaling pathway (p-value $1.4$x$10^{-3}$), NF-kappaB
signaling pathway (p-value $4.5$x$10^{-2}$), TGF beta signaling pathway (p-value $2.2$x$10^{-3}$) and Ras protein signaling pathway (p-value $1.3$x$10^{-2}$). Besides, some of the
induced modules are functionally enriched in processes disruptions in which
are known to be associated with initiation and progression of cancer. Some examples of such functions
include Apoptosis (p-value $4.2$x$10^{-3}$), Cell migration (p-value $1.3$x$10^{-3}$), Response
to growth factors (p-value $2.5$x$10^{-2}$), Cell cycle checkpoint (p-value $1$x$10^{-3}$), Cell-cell adhesion (p-value $3.1$x$10^{-3}$) etc.
We give examples of some of these modules in Figure~\ref{fig:subnets}.

\begin{figure*}[htp]
\centering
\includegraphics[width = 0.9\textwidth]{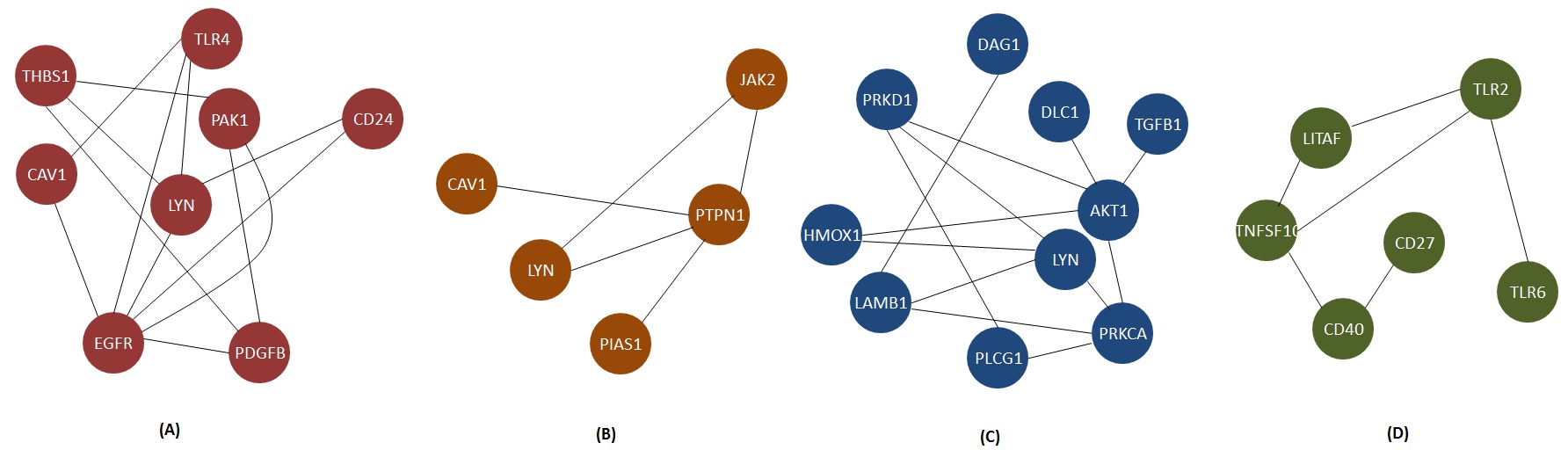}
\caption{Examples of functional modules, induced by markers and interactions
discovered by QUIRE and enriched in pathways and functions associated
with cancer. Module (A) reside in MAPK signaling pathway, module (B) in JAK-STAT
signaling pathway, module (D) in NF-kappaB signaling pathway and module (C)
is enriched in cell migration.
}
\label{fig:subnets}
\end{figure*}

\subsection{Characteristic Features of QUIRE}

\subsubsection{Effect of parameter $m$:} In this section, we perform
classification experiments to understand the effect of using different number
of input genes to QUIRE as identified by Overlapping Group Lasso. QUIRE
enumerates all possible pairwise interactions among these input genes
and returns the ones selected to most informative. For this experiment,
we increase the values of $m$ from $100$ to $400$ in steps of $100$
and we run QUIRE using these top $m$ genes. We use the whole set of protein protein interactions identified by Overlapping Group Lasso
in all the experiments. We report the average AUC score for ten runs of
five fold cross validation score on classifying the CRC samples for
prediction of CRC recurrence in Table \ref{tab:m_test}.
From this table, QUIRE shows best performance when the value of
$m$ is $200$. The performance gets worse as we increase the
$m$. The performance decrease may be attributed to the mechanism by which
Overlapping Group Lasso selects the groups. As sparsity is enforced
at the group level in Overlapping Group Lasso, some member genes
within the selected groups may be less informative about the phenotype and
as a result, their interactions result in
reduced prediction accuracy of QUIRE.

\begin{table}[h]
\vspace{-0.1 in}
\caption{Effect of parameter $m$ on classification performance of QUIRE}
\label{tab:m_test}
\begin{center}
\begin{tabular}{|c|c|}
\hline
Value of $m$ & Avg AUC score\\
\hline
    $100$ & $0.77$ \\
    $200$ & $0.79$ \\
    $300$ & $0.77$ \\
    $400$ & $0.73$ \\
\hline

\end{tabular}
\end{center}
\vspace{-0.1 in}
\end{table}

\subsubsection{Usage of different objective functions:}
We test the effect of using
different types of objective functions on the performance of QUIRE.
In our formulation of QUIRE, we use linear regression setting in the objective function.
The other option is to use
logistic regression. We perform classification experiments on CRC dataset which show that
the linear regression setting results in better performance on prediction of both CRC recurrence and
death from CRC, compared to the case where we use logistic regression formulation in the
objective function. We present the results from ten runs of five fold cross validation classification experiments in
Table \ref{tab:comp}, where we show average AUC score both for linear and logistic
regression formulation of QUIRE.

\begin{table}[h]
\vspace{-0.1 in}
\caption{Classification performance of QUIRE with Linear Regression and Logistic Regression formulations}
\label{tab:comp}
\begin{center}
\begin{tabular}{|c|c|c|}
\hline
Classification Experiment & Logistic Reg. & Linear Reg.\\
\hline
    CRC Recurrence & 0.69 & 0.79 \\
    Death from CRC & 0.69 & 0.81 \\
\hline

\end{tabular}
\end{center}
\vspace{-0.1 in}
\end{table}

\subsubsection{Quadratic feature interaction function:}
Next, we perform classification experiments to show the effect of
our proposed quadratic feature interaction  $g(x_j x_k)$ compared to $g(x_j) g(x_k)$.
The results from the ten runs of five fold cross validation classification experiments for prediction of  CRC recurrence and
death from CRC are shown in Table \ref{tab:int}. The AUC scores in the experimental
results show that our proposed original feature interaction formulation
$g(x_j x_k)$ is more effective in predicting post cancerous
events than the alternate weight sharing formulation
$g(x_j) g(x_k)$.

\begin{table}[h]
\vspace{-0.1 in}
\caption{Classification performance of QUIRE with different feature interaction mechanisms.}
\label{tab:int}
\begin{center}
\begin{tabular}{|c|c|c|}
\hline
Classification Experiment & $g(x_j) g(x_k)$ & $g(x_j x_k)$ \\
\hline
    CRC Recurrence & 0.71 & 0.79 \\
    Death from CRC & 0.70 & 0.81 \\
\hline

\end{tabular}
\end{center}
\vspace{-0.1 in}
\end{table}

\subsubsection{Performance of single gene features:}
We now illustrate the effects of using pairwise interaction in combination with the genes identified by Overlapping Group Lasso as informative ones. For this experiment, we use the group of genes discovered by Overlapping Group Lasso as ``Single Gene'' features in Lasso setting and compare it to the case where we additionally include pairwise interactions among these features and the protein protein interactions, as formulated in the objective function of QUIRE. We present results from ten runs of five fold cross validation classification experiments on CRC dataset in Table \ref{tab:sing}. The improved classification performance of the combination of single genes and pairwise interactions show that these interactions indeed provide additional information about the underlying biological phenomenon which cannot be captured by using single gene markers alone.

\begin{table}[h]
\vspace{-0.1 in}
\caption{Classification experiment to illustrate the effect of using pairwise gene interactions.}
\label{tab:sing}
\begin{center}
\begin{tabular}{|c|c|c|}
\hline
Classification Experiment & Single Gene Features & QUIRE \\
\hline
    CRC Recurrence & 0.70 & 0.79 \\
    Death from CRC & 0.70 & 0.81 \\
\hline

\end{tabular}
\end{center}
\vspace{-0.1 in}
\end{table}

\subsubsection{Performance of protein protein interactions:}
Finally, we perform classification experiments to observe the performance
of protein protein interactions on predicting CRC recurrence and death from
CRC. For this experiment, we use the single gene markers and the protein protein interactions selected
by Overlapping Group Lasso as input to QUIRE and
enumeration of the pairwise interactions among the marker genes
is avoided.
For ten runs of five fold cross validation experiment on this modified feature set,
we observe average AUC score of $0.71$ for both classification
tasks. These results show that besides physical interactions,
indirect higher level interactions among
the genes must be taken into account to understand
the basic mechanism of complex diseases.

\vspace{-0.2 in}
\section{Conclusion}
In this paper, we propose a computational approach, QUIRE, to identify combinatorial
interactions among the informative genes in complex diseases, like cancer. Our algorithm uses
Overlapping Group Lasso to identify functionally relevant gene markers and protein interactions
associated with cancer. It then explores the pairwise interactions among these relevant genes
within this reduced space exhaustively and the selected pairwise physical protein interactions to
discover the combination of individual markers and gene-gene interactions that are informative
for prediction of the disease status of interest. The application of QUIRE on three different types
of cancer samples collected using two different techniques shows that our approach performs
significantly better than the state-of-the-art feature selection methods such as Lasso and SVM
for biomarker discovery while selecting a smaller number of features, and it also shows that our
approach can capture discriminative interactions with high relevance to cancer progression.
Further investigations show that QUIRE can identify markers and interactions that have been
associated  previously with pathways associated with cancer. Moreover, the good performance
of QUIRE on the CRC dataset suggests that applications of QUIRE on genome-wide
microarray experimental data can be used to help prioritize Somamer design for
blood-based cancer diagnosis. And QUIRE applied to blood-based experimental data has the
great potential to impact the field of practical medical diagnosis.
 
However, QUIRE performs search for informative interactions on a lower dimensional space, which means that there is potential to miss interesting interactions relevant to the diseases. Extension of QUIRE to explore the whole genome wide interaction space will enable it to identify markers and interactions of more biological relevance. And more detailed investigations of the identified pairwise interactions among the informative genes will shed more light into the dynamics of the complex diseases.

\bibliographystyle{natbib}
\bibliography{recomb_refs}
\end{document}